\spnewtheorem{observation}{Observation}{\upshape\bfseries}{\slshape}
\spnewtheorem{claimm}[proposition]{Claim}{\bfseries}{\itshape}
\newcommand{\ignore}[1]{}
\def \Lyap {\rm L}
\begin{document}

\mainmatter  


\title{No Ascending Auction can find Equilibrium for SubModular valuations}


%
%
\author{Oren Ben-Zwi\inst{1}
\and Ilan Newman\inst{2}}

%
%
\institute{Tel-Hai College, Department of Computer Science, Tel-Hai, Israel\\
\email {oren.benzwi@gmail.com}
\and
University of Haifa Department of Computer Science, Haifa, Israel\\
\email{ilan@cs.haifa.ac.il}}

\maketitle

\begin{abstract}
We show that no efficient ascending
auction can guarantee to find even a minimal envy-free price vector if all valuations are
submodular, assuming 
a basic complexity theory's assumption.
\end{abstract}

\section{Introduction}
Ascending auctions are arguably the most important auction format, and are used as a synonym to the word auction in many scenarios.
Some advantages are visibility of price formation and gradual disclosure of valuations. It is harder for a dishonest auctioneer to manipulate the bids in such auctions, and it is easier
to adjust such auctions to a setting with budget constraints. It is a popular auction format in practice, and its theoretical properties have been widely studied.

The main task of an auction designer is to come up with a mechanism that allocates the items to the bidders such that no bidder is dissatisfied, or have \emph{envy}.
An allocation and a price vector were every bidder is allocated its desired set of items is called \emph{envy free} allocation, and the price vector is an \emph{envy free} price vector.
An envy free price vector is \emph{minimal} if there is no (bit-wise) lower envy free price vector.

If for an envy free price vector all items can be allocated (or equally we sometimes say all unallocated items have price zero - as we assume monotonicity) the price vector is called \emph{Walrasian price vector}~\cite{Walras1874}. For a Walrasian price vector, the given allocation maximizes the \emph{social welfare} which is the sum of the bidders valuations of their allocated sets~\cite{Mas-CollelWhGr1995}.

We are interested in the dynamics by which a minimal envy free price vector is calculated given a \emph{combinatorial auction} where all the bidder valuations are \emph{submodular}.

For combinatorial auctions, different bidder valuation classes
were examined by Lehmann et. al, and a hierarchy of valuation
classes was presented~\cite{LehmannLeNi2006}. In this hierarchy unit demand
and additive valuation classes are contained in the
gross substitute class; and the gross substitute class is a
subclass of submodular valuations. All the inclusions are
strict. Kelso and Crawford proved that if all valuations are gross substitute then a Walrasian equilibrium is guaranteed~\cite{KelsoCr1982}; therefore the problems of finding a minimal envy free price vector, finding the maximum social welfare and finding a Walrasian price vector collide.

Demange~et~al.~(DGS)~\cite{DGS86} were the first to formally define an ascending auction that terminates in a Walrasian equilibrium. They analyzed
the case of unit-demand valuations. Andersson~et~al.~\cite{AnderssonAnTa2013} fully characterize the class of all ascending auctions that terminate in a Walrasian equilibrium,
for unit-demand bidders.
Moreover, we can think of the Hungarian
method for the assignment problem~\cite{Kuhn1955} as a polynomial
ascending auction for the maximum welfare problem
with unit demand bidders, where the final price (dual variables) is a minimal envy free price vector.

A combinatorial auction with additive valuation bidders can be reduced to multiple one item auctions, hence the English (ascending) auction which is equivalent to a second price auction for all the items one by one, finds a minimal envy free price in polynomial time.

For gross substitute valuations many ascending auctions were
suggested~\cite{KelsoCr1982,GulSt2000,Ausubel2005,Ben-ZwiLaNe2013},
and a thorough discussion of all ascending auction for gross substitute valuations was
given in~\cite{Ben-Zwi2017}.
When the valuations are not restricted to be gross substitute (or more accurate, when Walrasian equilibrium is not guaranteed) the problem that was studied was the maximum social welfare one, rather than the minimal envy free we address here.
For general valuation classes, Blumrosen and Nisan showed
that regardless of time complexity, no ascending auction can
even approximate, to $\Omega(1/\sqrt{m})$, the maximum social welfare,
for $m$ being the number of items~\cite{BlumrosenNi2010}.

Under submodular
valuations, NP-hardness of the maximum social welfare was
proved by Lehmann et. al~\cite{LehmannLeNi2006}; approximation auctions were
given by~\cite{DobzinskiNiSc2005,DobzinskiSc2006,Feige2006,FeigeVo2010,Vondrak2008}; and inapproximation results were presented by~\cite{KhotLiMaMe2008,Vondrak2008,Vondrak2009,ChakrabartyGo2008,DobzinskiVo2013}. 

In this work we ask whether we can come up with an ascending auction that calculates minimal envy free price vector when all valuations are submodular. Our result show that we cannot hope to find such an auction to all submodular valuations.

\vspace{2mm}

\noindent
{\bf Main Results: } No efficient ascending auction can guarantee to find even minimal envy-free prices if all valuations are submodular, assuming $(NPC)\cap (co-NP)=\emptyset$.

\vspace{3.5mm}

\ignore{
One family of ascending auctions can be viewed as a family of primal dual algorithms. Bikhchandani and Mamer~\cite{BikhchandaniMa1997} show that a sufficient and necessary condition for the existence of a
Walrasian equilibrium is that there will be no integrality gap in a certain linear program, usually termed the ``winner determination linear program''. Ausubel explicitly uses the dual of this program to construct
his auction, and by iteratively improving the dual solution he finds Walrasian prices.

We show that, for submodular valuations, there cannot be an ascending auction which finds even minimum \emph{envy-free} prices, under some well established assumption in \emph{complexity theory}.
We add a short section which elaborate on basic \emph{complexity
theory} concepts in Appendix~\ref{complex}.


Walras suggested the famous \emph{t\^{a}tonnement auction} as an algorithm to find equilibrium~\cite{Walras1874}.
 It has been later recognized by Gul and Stacchetti~\cite{GulSt2000} that the DGS auction can be
generalized to gross substitutes valuations, as detailed above.
Recently, a new method for efficiently computing equilibrium, even when only an aggregate demand is sampled, was suggested~\cite{LemeWo2020}.
An introduction to both gross substitute valuations and ascending auctions is proposed by Roughgarden~\cite{Roughgarden2014};
while to combinatorial auctions is presented by Blumrosen and Nisan~\cite{blumrosenNi2007}.

The existence of a Walrasian equilibrium for non-GS valuation classes was studied in the last years. Bikhchandani and Mamer~\cite{BikhchandaniMa1997} describe few cases of super-additive valuation classes, and conclude that the exploration of this issue is an important subject for future research. Sun and Yang~\cite{SY06} identify a class of valuations with complements that ensure the existence of a Walrasian equilibrium, and later provide an iterative auction (that increases and decreases prices) which finds a Walrasian equilibrium for this class~\cite{SY09}.
A different submodular valuation class that guarantee Walrasian through an ascending auction is presented in~\cite{Ben-ZwiLaNe2013}.
Baldwin and Klemperer show another necessary and sufficient condition for the existence of a Walrasian equilibrium in terms of the \emph{demand types}, which are simply the possible (set-wise) changes in the demand, given a change in the price~\cite{BaldwinKl2019}.
Roughgarden and Talgam-Cohen uses computational complexity to show the existence or non-existence of equilibrium for different valuation classes~\cite{RoughgardenTa2015}.
Some more examples of non-GS valuation that guarantee the existence of a Walrasian equilibrium can be found in a sequence of works by Candogan and various coauthors~\cite{CandoganOzPa2015,Candogan2013,CandoganPe2018,CandoganOzPa2014}.

Gul and Stacchetti's result~\cite{GulSt1999} suggest that no unit demand generalization could guarantee Walrasian equilibrium.
Deligkas~et~al. present a unit demand generalization, or 'parameterization' to the family \emph{$k$-demand}, where a valuation is $k$-demand if a bundle's value is the maximum over all its subsets of size $k$~\cite{DeligkasMeSp2021}. They show that for $k=2$, even if all valuations are submodular, it is $NP$-hard to decide whether a Walrasian equilibrium exists.
}

\section{Preliminaries and Structural Results}\label{sec:prelim}
In the following we denote by $\bbbr_+$ the set of non-negative reals.
A \emph{combinatorial auction} is a setting where a finite set of
items, $\Omega$, is to be
allocated between a set of players. In what follows it is always
assumed that $|\Omega|=m$, and that there are $n$ players 
associated with valuations $v_i, ~ i=1, \ldots , n$, where $v_i :
2^{\Omega}\mapsto \bbbr_+$. Namely,  for every set $S \subseteq \Omega$,  $v_i(S)$
measures how much the $i-$th player favors the set of items $S$.
Valuation are assumed here to be monotone with respect to inclusion
(aka \emph{free disposal}), and that $v_i(\emptyset)=0$ for every
$i$. The auctioneer sets a price for each item. This is given by a
function $p: \Omega \mapsto \bbbr_+$.\footnote{We will refer to $p$ also as a price-vector.}
We define prices as \emph{real numbers}
and we will equate them later with linear programming variables,
but, as prices should reflect payments, we will assume integrality
when it will be more convenient later on; our results hold also for \emph{real} prices.
Having a price on single items, the price naturally
extends additively to subsets, namely, for $S \subseteq \Omega,~ p(S)
= \sum_{j \in S} p(j)$.
In this work we consider only bidders with submodular valuations.
\begin{definition} [submodular valuation]
A valuation $v:2^{\Omega}\mapsto \bbbr_+$ is a \emph{submodular valuation} if $\forall S,T: v(S\cup T)+v(S\cap T)\leq v(S)+v(T)$.
\end{definition}

Given a set of valuations $\{v_i\}_1^n$ and a price vector $p$, the
\emph{utility} of player $i$ from a set $S \subseteq \Omega$,  denoted
by $u_{i,p}(S)$ is defined by $u_{i,p}(S) = v_i(S) - p(S)$. The
\emph{demand} of a player $i$ and price vector $p$ is a collection of
subsets of items of maximum utility. Namely, $D_i(p) = \{S|\ \forall S'\subseteq \Omega, u_{i,p}(S')\leq u_{i,p}(S)\}$. 

Occasionally we omit subscripts when there is no risk of confusion,
e.g., we use $u_i(S)$ instead of $u_{i,p}(S)$ when $p$ is fixed or
well defined. When the player ($i$) is not important we use $v(S)$ instead of $v_i(S)$,
$u_p(S)$ for $u_{i,p}(S)$ and $D(p)$ for $D_i(p)$.
We use $u_p$, $u_i$ or $u_{i,p}$ to denote the utility of
player $i$, which is its utility from a demand set. So for instance $u_i = u_i(S)$ for
some $S\in D_i(p)$ when the price vector is known to be $p$.

We look at the space of functions on $\Omega$ as ordered by the
domination order, namely, for ${p,q}: \Omega \mapsto \bbbr_+$, $p \leq q$
if for every $j \in \Omega$, $p(j) \leq q(j)$. If $p \leq q$ we
say that $p$ is dominated by $q$.

An \emph{allocation} of the items is a map $A: \{1, \ldots n\}
\mapsto 2^{\Omega}$, which we also denote by $(S_1,\ldots,
S_n)$. Namely, where $A(i)=S_i \subseteq \Omega$. The requirement is
that the sets $\{S_i\}_{1}^{n}$ are pairwise disjoint. 
We think of an allocation
as allocating each player $i$, the  items of the set $S_i$.
 $S_i$ may be empty, for some $i \in [n] = \{1,2,\ldots,n\}$.

\begin{definition} [Envy Free, Minimal Envy Free]
Given a price vector $p$, an allocation is \emph{envy free} w.r.t. $p$
if every player is allocated a demand set. A price vector is envy free if there exists an envy free allocation for it. If also for all $q\leq p$ it holds that $q$ is not an envy free price vector then $p$ is a \emph{minimal} envy free price vector.
\end{definition}

As the name suggests, an envy-free allocation w.r.t. a price vector $p$
signifies an economical situation of `equilibrium' in the sense that
under the corresponding  price $p$, each player is satisfied with its
allocated set.   Clearly, for every set of
valuations, there is an envy-free price vector and an envy-free
allocation for it - simply take $p$ large enough so that the demand
set of every player contains just the empty set. Hence, also a minimal envy free always exists, however, it is not clear if it can be computed and indeed we discuss here the computation (or inability of computing) of a minimal envy free price.

\ignore{
A \emph{Walrasian allocation}~\cite{Walras1874} is an envy free allocation for
which every \emph{unallocated item}, has price zero. In other words,
the union of the allocated sets cover all items of positive price. 
 A price vector for which there exists a
Walrasian allocation is called a \emph{Walrasian price vector}. 

The existence of a Walrasian price vector is not guarantied for every
set of valuations. If it exists it is said to be Walrasian equilibrium
and the set of valuation is said to poses a
Walrasian equilibrium. There is another importance of the
Walrasian equilibrium - if it exists,  it is known to maximize the
so called \emph{social welfare} among all
allocations.  The social welfare is $\sum_{i\in
  [n]}{v_i(S_i)}$, the sum of the individual values of the sets that
are allocated. The maximum social welfare can be written as a solution to the
following \emph{winner determination} integer linear program (ILP), $P1$: 
\begin{equation}
\max
\sum_{i \in [n],\, S \subseteq \Omega}
x_{i,S} \cdot v_i(S)
\end{equation}
\begin{equation}
\mbox{ s.t. } \quad \quad \enspace \sum_{i \in [n],\, S \mid j \in S}{x_{i, S}} \le 1 \quad \forall j \in \Omega
\end{equation}
\begin{equation}
\quad \quad \quad \quad \quad \quad \sum_{S \subseteq \Omega} x_{i, S} \le 1 
\quad \quad \forall i \in [n]
\end{equation}
$$\quad \quad \quad \quad \quad \quad \quad \quad \quad x_{i,S} \in \{0,1\} \quad \quad \quad \forall i \in [n],\, S\subseteq \Omega$$
%
%
If we relax integrality constraints of the program we get a linear program whose
dual, $P2$, is:

\begin{equation}
\min
\sum_{i \in [n]} \pi_i + \sum_{j \in \Omega} p_j \quad \quad \quad
\end{equation}
\begin{equation}
s.t. \quad \quad \quad
\pi_i + \sum_{j \in S} p_j \ge v_i(S)\quad
\forall i \in [n],\, S \subseteq \Omega
\end{equation}
$$\quad \quad \quad \quad \pi_i \ge 0,\, p_j \ge 0 \quad \quad \quad \forall i \in [n],\, j \in \Omega$$

Bikhchandani and Mamer~\cite{BikhchandaniMa1997} observed that a
Walrasian equilibrium exists if and only if the value of the maximum social
welfare equals the optimum in the problem $P2$. Namely the integrality
gap of the LP relaxation of  $P1$, is $1$. Moreover, in this case,
 the set of the optimal dual variables $\{p_j\}_{j \in \Omega}$ is a
 Walrasian price vector. 

By the dual constraints we can bound the dual variables $\pi_i$ from bellow by the players'
utility $u_{i,p}$. Since we wish to minimize these values and the utility is defined by the prices, we switch the objective to finding a price vector that minimizes the following.
\begin{definition}{\upshape (}Lyapunov:{\upshape )}
$~\Lyap (p) = \sum_{i} u_{i,p} + \sum_j p_j$
\end{definition}

We usually consider valuations that are rationals. In this case, 
since the (integer) linear program $P1$ is invariant to scaling of
$v_i,~ i\in [n]$, and $P2$ is invariant to scaling of $v_i,~i\in [n]$ and  $p$ by
the same factor, we may assume that
$v_i, i~ \in [n],p,\pi$ are integers when convenient.

The \emph{gross substitute} class of valuation is the class in which a player never drops an item whose price was not increased in an ascending auction dynamics. Formally:

\begin{definition}{\upshape (}gross substitute $(GS)${\upshape )}\label{def:gs}
A valuation $v$ is \emph{gross substitute} if for every price vector
$p$ and  $S\in D(p)$, for every price vector $q \geq p$, $\exists S'\in
D(q)$ such that $S^=(p,q)\subseteq S'$.

Where $S^=(p,q) = \{j|\ j\in S, p(j)=q(j)\}$.
\end{definition}

In other words, if $S$ is a
demand set for $p$, and $q \geq p$ then there is a demand set for $q$
that contains all elements $j \in S$ for which $p(j)=q(j)$. 

The class of gross substitute valuations is important and has been
extensively studied. It is known that gross substitute valuations
are submodular. It contains both unit demand and additive
valuations (where for both every single item has a value, and the value of a
set is the maximum or the sum of values of its items, respectively);
and when all valuations are gross substitute a Walrasian equilibrium is guaranteed~\cite{KelsoCr1982}.
}

A natural and intuitive search dynamics for any equilibrium is starting with any price vector, then, if there exists an `excess demand' set, increase the prices of its items, if there exists an `excess supply' set, decrease the prices of its items and if there is no such then the price is equilibrium. This process was termed a \emph{t\^{a}tonnement auction}
by Walras~\cite{Walras1874}; however, the notions of excess demand and
excess supply are not easily definable in general, as each bidder may prefer
many different sets for a given price. In fact, our result shows that one cannot define an `excess demand' set when all valuations are submodular.

An ``ascending-auction'' is just the same t\^{a}tonnement auction when it is restricted to only one direction; formally:   In an ascending-auction the auctioneer starts from the
price vector $p_0=\bar{0}$ and at each step $t$, finds whether $p_t$ is
envy free, or increases the prices of some items to obtain the next
price vector $p_{t+1}$. Thus such a process makes sense also as real
economical process.

The notion of ascending-auctions is not well defined. One can
potentially start with $p_0 =\bar{0}$, compute an equilibrium price vector $p^*$ and just set $p_1 = p^*$. A natural ascending-auction should be such that the next step
can be decided from the current step with a very limited knowledge of
the individual valuations (e.g., at step $t$ only access to $D_i(p_t),
i=1, \ldots n$ should be used), and the increase in prices should be
`natural' namely, can be shown to be required, in order to arrive at an
envy-free allocation.  This raises the following discussion.

\section{Witnesses for the non-existence of an allocation}\label{sec:further}

Finding an allocation when all valuations are unit demand is equivalent to finding a matching in a bipartite graph, hence, by Hall's marriage theorem there exists a \emph{witness} for \emph{not} having an allocation~\cite{Hall1935}. This witness is crucial in building an ascending auction as it indicates that the current price vector is not envy free. For all other valuation classes that we know already of an ascending auction the same phenomena occurs, that is, there exists (sometimes implicit) a witness for any non envy free price vector. Therefore, assuming that a current non envy-free price $p$ satisfies $p \leq p^*$ for an envy-free price vector $p^*$, the witness directs the auction to what prices should be increased.

So the problem of whether or not a witness exists, given a price vector, is equivalent to the problem of whether this price vector is \emph{not} an envy free price vector or it is. If we are also guaranteed that the witness is of polynomial size, then the problem of finding a witness is an $NP$ problem, and the problem of finding a minimal envy free price vector is a $co-NP$ problem.
Since an envy free allocation is a (polynomial size) solution for the envy free price vector problem and being minimal is promised by the previous vector being non envy free, the problem of finding a minimal envy free price vector is also in $NP$.
Hence the problem of finding a minimal envy free price vector is in $NP\cap (co-NP)$ and therefore, by a well established assumption cannot be in $NPC$ (see for example~\cite{Sipser1997}).

If we consider the problem of finding an allocation for an equilibrium price vector we can now require that this problem be solvable in polynomial time (or at least not be $NPC$) as a necessary condition for the existence of an ascending auction. Roughgarden and Talgam-Cohen offer looking at two problems in order to know if (Walrasian) equilibrium exists, these are the allocation (aka welfare) problem and the demand (or demand oracle) problem~\cite{RoughgardenTa2015}. They proved that if for a set of valuations, the demand problem is computationally harder than the allocation problem then for this set of valuation an equilibrium is guaranteed. 

Roughgarden and Talgam-Cohen's result suggest a computational method that solves a combinatoric, or even an economic question, namely, the existence of an equilibrium. We offer a computational method to solve an algorithmic problem - the existence of an ascending auction, but note that we are not discussing any computational time here but rather the computational \emph{size}; if the allocation problem is in $NPC$ then there cannot be even a polynomial \emph{size} witness, and therefore not a polynomial size ascending auction.

For general valuations we suggested in~\cite{Ben-ZwiLaNe2013} that there cannot be a polynomial size characterization, since, for instance, \emph{3-dim-matching} is an $NPC$ problem, hence the valuations can be such that the demand sets are restricted to size two (and moreover, each demand set contains at most two pairs), see~\cite{Karp1972}.

We want to show the same result also for submodular valuations. The valuation class we need to suggest then is a submodular one for which the allocation problem is in $NPC$. 
The \emph{Budget Additive} valuation class is fulfill the two requirements and so is the \emph{multi-peak submodular} valuation. For both these valuation classes, by Roughgarden and Talgam-Cohen, a Walrasian equilibrium is not guaranteed. However, for the former, the demand problem is $NPC$ and for the later the demand problem is polynomial. We state that for both we cannot even find a minimal envy free price vector.
\begin{theorem}\label{lower_bound_thm}
There is no polynomial size characterization for a price vector being non envy free
when all valuations are submodular, assuming $(NPC)\cap (co-NP)=\emptyset$.
\end{theorem}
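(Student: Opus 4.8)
The plan is to argue by contradiction, routing through the complexity of the \emph{allocation problem}: given a price vector $p$, decide whether $p$ is envy free, that is, whether the demand correspondences $D_i(p)$ admit a pairwise-disjoint system of representatives $(S_1,\dots,S_n)$ with $S_i\in D_i(p)$. Suppose, for contradiction, that for all submodular valuations there were a polynomial-size, efficiently checkable characterization (a \emph{witness}) certifying that a price vector is \emph{not} envy free. As in the discussion above, such a witness would place the problem ``$p$ is not envy free'' in $NP$, equivalently it would place the allocation problem ``$p$ is envy free'' in $co$-$NP$, uniformly over submodular instances.

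Next I would pin down a concrete submodular class for which the allocation problem is intractable; the natural candidates are the \emph{Budget Additive} and \emph{multi-peak submodular} classes named above, both submodular and both lacking a guaranteed Walrasian equilibrium. The core step is a polynomial-time reduction from a known $NP$-complete packing/covering problem --- in the spirit of the $3$-dimensional-matching reduction of \cite{Ben-ZwiLaNe2013,Karp1972} --- to this allocation problem: from an instance $I$ I would build valuations $v_1,\dots,v_n$ in the chosen class together with a single price vector $p$ so that each player's demand $D_i(p)$ collapses to a prescribed small family of candidate bundles, and so that a disjoint allocation of demand sets exists \emph{if and only if} $I$ is a yes-instance. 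This makes the allocation problem $NP$-hard, hence deciding non-envy-freeness is $co$-$NP$-hard.

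Combining the two ingredients gives the contradiction. For the multi-peak submodular class the demand oracle runs in polynomial time, so a proposed allocation can be verified efficiently; the allocation problem is therefore in $NP$, and with the reduction it is $NP$-complete. Together with the $co$-$NP$ membership forced by the hypothesized witness this yields an $NP$-complete language inside $co$-$NP$, i.e.\ $(NPC)\cap(co$-$NP)\neq\emptyset$, contradicting the assumption. (More directly, $NP$-hardness of the allocation problem together with its $co$-$NP$ membership already forces $NP=co$-$NP$, so the same contradiction is reached for Budget Additive even though its demand problem is itself $NPC$.)

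The main obstacle is the reduction of the second paragraph. Submodularity, $v(S\cup T)+v(S\cap T)\le v(S)+v(T)$, is a real constraint, so the difficulty is to engineer gadget valuations whose demand sets at the chosen price $p$ are \emph{rigid} enough to encode an $NP$-complete instance, while remaining submodular everywhere and, in the multi-peak case, keeping the demand oracle polynomial. Verifying submodularity of the gadgets and checking that the intended bundles are \emph{exactly} the maximum-utility sets --- so that no unintended allocation can satisfy every player --- is where the real work lies.
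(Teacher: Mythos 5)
Your proposal follows essentially the same route as the paper: a polynomial-size witness for non-envy-freeness would put the allocation problem in $co-NP$, while $NP$-hardness of that problem for a concrete submodular class (budget additive or multi-peak submodular) then contradicts $(NPC)\cap (co-NP)=\emptyset$. The one difference is that the ``main obstacle'' you flag --- engineering the hardness reduction --- is not carried out in the paper either; it is discharged by citation, to Lehmann et al.\ for the $NP$-hardness of the allocation problem under budget additive valuations and to Dobzinski and Vondr\'ak for the (implicit) $NP$-hardness of deciding whether the $0$-price is envy free under multi-peak submodular valuations.
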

\begin{corollary}
There is no ascending auction that finds even minimal envy free price for submodular valuations.
\end{corollary}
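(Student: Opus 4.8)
The plan is to argue by contradiction, using the complexity bookkeeping set up just before the statement. I fix a concrete submodular valuation class for which the allocation (welfare) problem is $NP$-complete; the clean choice is the \emph{multi-peak submodular} class, where (via Roughgarden and Talgam-Cohen) Walrasian equilibrium is not guaranteed and the allocation problem is $NP$-complete, yet the demand problem is solvable in polynomial time. The polynomiality of the demand problem is what makes the membership arguments go through: given a candidate allocation $(S_1,\ldots,S_n)$ one verifies in polynomial time that each $S_i\in D_i(p)$, so the predicate ``$p$ is envy free'' lies in $NP$ with an envy-free allocation as its witness. Suppose now, toward a contradiction, that there is also a polynomial-size characterization certifying that a price vector $p$ is \emph{not} envy free. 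Then ``$p$ is not envy free'' is in $NP$ as well, so ``$p$ is envy free'' is simultaneously in $(co-NP)$, placing the envy-free decision problem in $NP\cap(co-NP)$.

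The heart of the argument --- and the step I expect to be the main obstacle --- is to show that the envy-free decision problem is $NP$-hard for this class, by reducing the $NP$-complete allocation problem to it. Concretely, given an instance of the welfare problem on multi-peak submodular valuations together with a target $k$, I would build an augmented instance together with a designated price vector $p$ so that $p$ is envy free exactly when the original items admit an allocation of welfare at least $k$. The construction must simultaneously keep every valuation submodular and multi-peak, so that the demand oracle stays polynomial and the membership argument above survives, and calibrate the prices so that each player is willing to accept precisely the bundles that can participate in a welfare-$k$ allocation while every lower-welfare configuration leaves some player strictly preferring a different bundle, so that no system of pairwise-disjoint sets $S_i\in D_i(p)$ exists. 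Transplanting the $NP$-hardness of allocation into the existence of disjoint demand sets at a single price is the delicate part; the guiding intuition is the Walrasian/LP-duality link, under which envy-free prices that allocate all items are supported by welfare-maximizing allocations, so that forcing envy-freeness should force a high-welfare packing.

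Granting the reduction, the envy-free decision problem is in $NPC$, and by the first paragraph it is also in $(co-NP)$; hence it lies in $(NPC)\cap(co-NP)$, contradicting the standing assumption $(NPC)\cap(co-NP)=\emptyset$. This proves Theorem~\ref{lower_bound_thm}: no polynomial-size characterization of non-envy-freeness can exist. The corollary then follows immediately, since any efficient ascending auction must, at each non-terminal step, justify raising prices rather than halting by exhibiting polynomially-bounded evidence that the current price $p$ is not envy free --- that is, exactly the forbidden witness. Hence no efficient ascending auction can be guaranteed to reach even a minimal envy-free price when valuations are submodular; the same conclusion should hold for the \emph{budget additive} class, although there the membership step is more delicate because the demand problem is itself $NP$-complete.
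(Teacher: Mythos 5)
Your complexity bookkeeping matches the paper's: envy-freeness of $p$ is in $NP$ (an envy-free allocation is the witness, checkable because the demand problem for multi-peak submodular valuations is polynomial), a polynomial-size characterization of non-envy-freeness would place the problem in $co$-$NP$ as well, and the contradiction with $(NPC)\cap(co\mbox{-}NP)=\emptyset$ then kills any ascending auction, since such an auction must justify each price increase --- in particular, must certify that the starting price $\bar{0}$ is not envy free --- by exhibiting exactly the forbidden witness. That framing is the same as Section~3 of the paper.

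However, there is a genuine gap at the step you yourself flag as the heart of the matter: you never establish that deciding envy-freeness is $NP$-hard for a submodular class. You only describe a reduction you \emph{would} build (augmenting a welfare instance with target $k$ and a designated price vector so that envy-freeness at that price encodes a welfare-$k$ packing), and the guiding intuition you invoke --- the Walrasian/LP-duality link between envy-free prices and welfare-maximizing allocations --- is precisely what fails here: for multi-peak submodular (and budget additive) valuations Walrasian equilibrium is not guaranteed, envy-free prices need not allocate all items, and so envy-freeness does not straightforwardly certify high welfare. The paper closes this gap not by constructing a new reduction but by citing Dobzinski and Vondrak, whose multi-peak construction implicitly shows that it is already $NP$-hard to decide whether the $0$-price admits an envy-free allocation (at price $\bar{0}$ every player must receive a value-maximizing set, which encodes a hard packing problem directly, with no target $k$ or designated price needed). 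Without that hardness result, or a completed version of your reduction, the argument does not go through.
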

A valuation $v$ is Budget additive if we have a value $v_j$ for each item $j$ and a budget $b$ and for each set $S\subseteq \Omega$ we have its value be
$v(S) = min\{b,\sum_{j\in S}{v_j}\}$.

The multi-peak submodular valuation is defined by attaching one `global' valuation function
with some $k$ other function, given a set system of $k$ sets and a `partition'
of the hypercube into regions which are either `close' to one set or `far' from all.
\begin{definition}[$\epsilon$-close]
A set $S$ is defined to be \emph{$\epsilon$-close} to a set $T$ if $$|S\cap T| - |S\setminus T| > \epsilon |T|.$$
Note that this notion is asymmetric.
\end{definition}
Now given a rational $\epsilon$, a set system ${\cal A} = \{A_1,A_2,...,A_k\}$ such that $\forall i, |A_i| = s$
and $|A_i\cap A_{i'}| \leq \epsilon s$ for $i\neq i'$, we define the multi-peak submodular functions valuation
$v(S)$ in the following manner.
If $\exists A\in {\cal A}$ such that $S$ is $\epsilon$-close to $A$ (unique by~\cite{DobzinskiVo2013}) then:
$$v(S) = \frac{|S\cap A|(2 - \epsilon) + |S \setminus A|(2 + \epsilon)}{2s} + \frac{|S\cap A||S\setminus A|}{s^2} + \frac{\epsilon^2}{4}$$
Else ($S$ is far from any $A\in {\cal A}$) we set: $$v(S) = \frac{|S|}{s} - \frac{|S|^2}{4s^2}$$
This valuation is monotone submodular for every such ${\cal A}, \epsilon, s, k$~\cite{DobzinskiVo2013}.

\begin{proof} (of Theorem~\ref{lower_bound_thm})
Lehmann et. al, prove that the allocation problem of budget additive valuations is $NP-hard$~\cite{LehmannLeNi2006}.
Dobzinski and Vondrak  prove that \emph{multi-peak submodular functions} valuations are submodular and show,
implicitly, that it is $NP-hard$ to decide whether the $0$-price allows envy-free allocation~\cite{DobzinskiVo2013}.

We stress that an ascending auction must decide whether or not the $0$-price is envy-free, hence, assuming $(NPC)\cap (co-NP)=\emptyset$, there cannot be any polynomial size characterization for an ascending auction.
\end{proof}

Roughgarden and Talgam-Cohen also stress that an allocation problem which is not known to be in $P$ will probably not be in $co-NP$~\cite{RoughgardenTa2015}.

Note also that for some valuations we have a polynomial representation, e.g., the valuations
that will result in a \emph{3-dim matching} or the \emph{budget additive} valuations, while other valuation classes have no
short representation, like gross substitute, for instance~\cite{Knuth1974,BalcanHa2018}.
However, unlike budget additive, for gross substitute and for \emph{multi-peak submodular} valuations we have a polynomial time \emph{demand oracle}. A demand oracle for GS is known, see for example Leme~\cite{Leme2017}; we present a polynomial demand oracle for multi-peak submodular valuations:
\begin{theorem}
There exists a polynomial time algorithm that computes a demand set for multi-peak submodular valuations.
\end{theorem}
\begin{proof}
We will run $k+1$ different algorithms to extract the maximum utility
of any one of the original `glued' functions.
The algorithms exploit the fact that the `identity' of an item does not
reflect in the valuation function, hence the price of an item is the sole
criteria for consideration. The different algorithms then only consider
the marginal value added by an item (or a set of two items) against the price
of this item (these two items).
We mark the items by $j_1,j_2,...,j_m$ such that $p_{j_1}\leq p_{j_2}\leq, ... ,\leq p_{j_m}$.
\begin{itemize}
	\item	Algorithm 0:
	\item	find an $l$ such that
	\item     $p_{j_l} \leq 1/s - \frac{2l-1}{4s^2}$
	\item     $p_{j_{l+1}} > 1/s - \frac{2l+1}{4s^2}$
	\item     return $S^0 = \{j_1,j_2,...,j_l\}$
\end{itemize}
For each set $A_i \in {\cal A}$ we mark $A_i$'s items by $j^i_1,j^i_2,...,j^i_s$ such that
$p_{j^i_1}\leq p_{j^i_2}\leq, ... ,\leq p_{j^i_s}$, $\forall r, j^i_r\in A_i$.
We mark the other items by $j^{-i}_1,j^{-i}_2,...,j^{-i}_{m-s}$ such that
$p_{j^{-i}_1}\leq p_{j^{-i}_2}\leq, ... ,\leq p_{j^{-i}_{m-s}}$, $\forall r, j^{-i}_r\notin A_i$.
\begin{itemize}
	\item	Algorithm i:
	\item	for each $l > \epsilon s$ find the largest $\kappa = \kappa(l)$ such that
	\item     $\kappa + \epsilon s < l$
	\item     $p_{j^i_l} + p_{j_\kappa^{-i}} \leq \frac{3}{2s} - \frac{2(\kappa + l) - \epsilon s}{4s^2}$
	\item	mark $S_{l,\kappa}^i = \{j^i_1,j^i_2,...,j^i_l\}\cup\{j^{-i}_1,j^{-i}_2,...,j^{-i}_\kappa\}$
	\item	return $S^i = \arg\max_{S_{l,\kappa}^i\textrm{ is marked}}{\{u_p(S_{l,\kappa}^i)\}}$
\end{itemize}

\begin{proposition}
$\forall S, u_p(S) \leq \max_{i=\{0\}\cup[k]}{\{u_p(S^i)\}}$
\end{proposition}
\begin{proof}
Let $D$ be a demanded set for price vector $p$.
Assume first that $\not\exists A_i\in {\cal A}$ such that
$D$ is close to $A_i$.
We will see that $u_p(D)\leq u_p(S^0)$.
Indeed by definition if $|S^0| < |D|$ then $\exists j\in D\setminus S^0$
such that $p_j \geq p_{j_{l+1}}$ for $l$ as given by Algorithm~$0$ and therefore,
$u_p(D) < u_p(D\setminus \{j\})$. If $|S^0| > |D|$ then $\exists j\in S^0\setminus D$
such that $p_j \leq p_{j_l}$, for the same $l$, hence, 
$u_p(D) < u_p(D\cup \{j\})$. Finally, if $|S^0| = |D|$ then $v(S^0) \geq v(D)$
and $p(S^0) \leq p(D)$.

Assume next that $D$ is close to $A_i \in {\cal A}$. Let
$|D\cap A_i| = l^*$ and $|D \setminus A_i| = \kappa^*$.
Now by assumption $\kappa^* + \epsilon s < l^*$.
If $p_{j^i_{l^*}} + p_{j_{\kappa^*}^{-i}} > \frac{3}{2s} - \frac{2(\kappa + l) - \epsilon s}{4s^2}$
then surely $\exists j_1\in D\cap A_i$, $j_2\in D\setminus A_i$ such that
$p_{j_1}\geq p_{j^i_{l^*}}$ and $p_{j_2} \geq p_{j_{\kappa^*}^{-i}}$ hence,
$u_p(D) < u_p(D\setminus \{j_1,j_2\})$.
Therefore, $p_{j^i_{l^*}} + p_{j_{\kappa^*}^{-i}} \leq \frac{3}{2s} - \frac{2(\kappa + l) - \epsilon s}{4s^2}$,
but then since $\kappa^* + \epsilon s < l^*$ the algorithm either considered $S_{l^*,\kappa^*}^i$
or found a better $\kappa$ for $l^*$. Either way we get that $u_p(D) \leq u_p(S_{l^*,\kappa^*}^i) \leq u_p(S^i)$.
\end{proof}
A polynomial algorithm to compute a demand set then is just
find $S^i$ for $i\in \{0\}\cup [k]$ and return $\arg\max_{i\in\{0\}\cup[k]}{\{u_p(S^i)\}}$.
It is easy to see this is polynomial in $k, s$
\end{proof}


\section*{Appendix - Basic Complexity definitions}\label{complex}
\appendix 
\vspace{0.7cm}
\emph{Complexity theory} tries to define \emph{complexity classes} for different problems, based on the time
needed in order to solve them; by the worst case scenario with respect to the input's size.
A problem is in $P$ if there exists a polynomial time algorithm which solves it.
A problem is in $NP$ if there exists a polynomial time algorithm which given a solution, verifies it.
A problem is in $co-NP$ if there exists a polynomial time algorithm which given a witness for the instance being wrong, verifies it.
It is straightforward that both $P\subseteq NP$ and $P\subseteq co-NP$, and it is believed that $NP\neq (co-NP)$.

A problem is $NP$-Complete ($NPC$) if there is a polynomial time reduction from every other $NP$ problem to it.
If $NP\neq (co-NP)$ then $(NPC)\cap (co-NP)=\emptyset$. For more detailed background see, for example~\cite{Sipser1997}.

\bibliographystyle{plain}
\bibliography{biblio}

\end{document}